\begin{document}

\title{Strengthened Hardness for Approximating Minimum Unique Game and Small Set Expansion}
\titlerunning{Strengthened Hardness}  
\renewcommand\thefootnote{}

\author{Peng Cui\inst{1}}

\authorrunning{Peng Cui}   
%
\tocauthor{Peng Cui}
\institute{Key Laboratory of Data Engineering and Knowledge Engineering, MOE,
School of Information Resource Management, Renmin University of China, Beijing
100872, P. R. China.\\
\email{cuipeng@ruc.edu.cn}}

\maketitle              

\begin{abstract}
In this paper, the author puts forward a variation of Feige's Hypothesis, which claims that it is hard on average refuting Unbalanced Max 3-XOR under biased assignments on a natural distribution. Under this hypothesis, the author strengthens the previous known hardness for approximating Minimum Unique Game, $5/4-\epsilon$, by proving that Min 2-Lin-2 is hard to within $3/2-\epsilon$ and strengthens the previous known hardness for approximating Small Set Expansion, $4/3-\epsilon$, by proving that Min Bisection is hard to approximate within $3-\epsilon$. In addition, the author discusses the limitation of this method to show that it can strengthen the hardness for approximating Minimum Unique Game to $2-\kappa$ where $\kappa$ is a small absolute positive, but is short of proving $\omega_k(1)$ hardness for Minimum Unique Game (or Small Set Expansion), by assuming a generalization of this hypothesis on Unbalanced Max k-CSP with Samorodnitsky-Trevisan hypergraph predicate.
\end{abstract}

\section{Introduction}
Recently, the authors of \cite{ow1} show a new point of $(1/2,3/8)$-approximation NP-hardness of UG, which is an improvement of previously known $(3/4,11/16)$-approximation NP-hardness of UG\cite{h}. Their result determines a two-dimensional region of for $(c,s)$-approximation NP-hardness of UG, namely, the triangle with the three vertices $(0,0)$, $(1/2,3/8)$, and $(1,1)$. All known points of $(c,s)$-approximation NP-hardness of UG are in the triangle, plus an inferior bump area near the origin by \cite{fr} (See Fig. 1).

The curve $(1-\epsilon,1-c\sqrt{\epsilon})$ for some constant $c$ and small $\epsilon$ seems a separate line between hard and easy regions of the $(c,s)$ plane. By Raz's parallel repetition theorem\cite{r}, UGC is implied by $(\epsilon,\epsilon^p)$-approximation NP-hardness of Min UGC for any constant $1/2<p<1$. In the opposite direction, UGC implies $(\epsilon,c_1\sqrt{\epsilon})$-approximation NP-hardness of Min 2-Lin-2 for some constant $c_1$\cite{kkmo}. In the hardness side, the standard Max Cut SDP relaxation has a SDP gap $(\epsilon,c_2\sqrt{\epsilon})$ for some constant $c_2$ with respect to Min UnCut\cite{ow2}. In the algorithm side, the subexponential algorithm given by \cite{abs} returns a solution with value at most $c_3\sqrt{\epsilon}$ on instance of Min UG with value $\epsilon$ for some constant $c_3$. However, the best known NP-hardness for approximating Min UG is still $5/4-\epsilon$ despite all the efforts.

It is known that we can rule out the possibility of PTAS for Min Bisection under a complexity assumption stronger than $NP\ne P$\cite{k2}, and that Min Bisection is hard to approximate within $4/3-\epsilon$ assuming Feige's Hypothesis\cite{f}. It would be interesting to answer the question whether we can further enlarge the hardness gap of Min Bisection (and SSE).

The authors of \cite{f,a} have established connection between approximation complexity and average complexity. They use average complexity to prove inapproximability results for some famous problems, which has resisted discovery of meaningful inapproximability results under standard complexity assumptions. A recent example of such problems is Densest $\kappa$-Subgraph\cite{aammw}.

In this paper, the author puts forward a variation of Feige's Hypothesis, which claims that it is hard on average refuting Unbalanced Max 3-XOR under biased assignments on a natural distribution. Under this hypothesis, the author strengthens the previous known hardness for approximating Minimum Unique Game, $5/4-\epsilon$, by proving that Min 2-Lin-2 is hard to within $3/2-\epsilon$ and strengthens the previous known hardness for approximating Small Set Expansion, $4/3-\epsilon$, by proving that Min Bisection is hard to approximate within $3-\epsilon$.

In addition, the author discusses the limitation of this method to show that it can strengthen the hardness for approximating Minimum Unique Game to $2-\kappa$ where $\kappa$ is a small absolute positive, but is short of proving $\omega_k(1)$ hardness for Minimum Unique Game (or Small Set Expansion), by assuming a generalization of this hypothesis on Unbalanced Max k-CSP with Samorodnitsky-Trevisan hypergraph predicate.

\section{Preliminaies}
In {\it Unique Game (UG)}, we are given a graph $G=(V,E)$, and a set of labels, $[k]$. Each edge $e=(u,v)$ in the graph is equipped with a permutation $\pi_{e}:[k]\rightarrow[k]$. The solution of the problem is a labeling $f:V\rightarrow [k]$ that assigns a label to each vertex of $G$. An edge $e=(u,v)$ is said to be satisfied under $f$ if $\pi_{e}(f(u))=f(v)$. The goal of the problem is to find a labeling such that the number of the satisfied edges under this labeling is maximized. The value of the instance $Val(I)$ is defined as the maximum fraction of the satisfied edges over all labeling. In the same situation of UG, the goal of {\it Minimum Unique Game (Min UG)} is to find a labeling such that the number of the unsatisfied edges under this labeling is minimized. The value of the instance $Val(I)$ is defined as the minimum fraction of the unsatisfied edges over all labeling.

In {\it Max 2-Lin-2}, we are given a set of linear equations over $GF[2]$. Each equation contains exactly two variables. The goal of the problem is to seek an assignment of the variables such that the number of satisfied equations is maximized. In the same situation of Max 2-Lin-2, the goal of {\it Min 2-Lin-2} is to seek an assignment of the variables such that the number of unsatisfied equations is minimized. In {\it Max Cut}, we are given a graph, and the goal of the problem is to seek a cut of the graph with maximum edges. In {\it Min UnCut}, we are given a graph, and the goal of the problem is to seek a cut of the graph that leaves minimum edges uncut. Note that Max 2-Lin-2 and Max Cut are two special cases of UG, and Min 2-Lin-2 and Min UnCut are two special cases of Min UG.

The {\it $(c,s)$-approximation NP-hardness} of UG is defined as: for some fixed $0<s<c<1$, there is a $k$ such that given an instance $I$ of UG with $k$ labels it is NP-hard to distinguish whether $Val(I)\ge c$ or $Val(I)<s+\epsilon$ for any $\epsilon>0$. For any fixed $0<c'<s'<1$, the {\it $(c',s')$-approximation NP-hardness} of Min UG is defined as: there is a $k$ such that given an instance $I$ of Min UG with $k$ labels it is NP-hard to distinguish whether $Val(I)>s'-\epsilon$ or $Val(I)<c'+\epsilon$ for any $\epsilon>0$. Similarly, we can define the {\it $(c',s')$-approximation hardness} of Min UG under Conjecture 2 or Conjecture 4.

In {\it Small Set Expansion (SSE)}, we are given a graph $G=(V,E)$ and a constant $0<\delta\le 1/2$. The goal of the problem is to find a subset $S\subseteq V$ satisfying $|S|/|V|=\delta$ such that $\Phi(S)$, the edge expansion of $S$ is minimized. The {\it edge expansion} $\Phi(S)$ of a subset $S\subseteq V$ is defined as: $\Phi(S)=\frac{|V||E(S,V\setminus S)|)}{|E||S|}$. The {\it expansion profile} is defined as: $\Phi_{G}(\delta)=\min_{|S|/|V|=\delta}{\Phi(S)}$, where $0<\delta\le 1/2$. As a special case of Small Set Expansion Problem, {\it Min Bisection} is defined as: given a graph $G$ with $n$ vertices, where $n$ is even, find a set $S$ of $n/2$ vertices (a bisection) such that the number of edges connecting $S$ and $V\setminus S$ (the bisection width) is minimized.

The {\it Unique Game Conjecture (UGC)}\cite{k1} states: for every $\zeta,\delta>0$, there is a $k=k(\zeta,\delta)$ such that given an instance $I$ of UG with $k$ labels it is NP-hard to distinguish whether $Val(I)>1-\zeta$ or $Val(I)<\delta$. The {\it Small Set Expansion Hypothesis (SSEH)}\cite{rs} states: for every $\eta>0$, there is a $\delta$ such that it is NP-hard to distinguish whether $\Phi_{G}(\delta)>1-\eta$ or $\Phi_{G}(\delta)<\eta$. The authors of \cite{rs} show that SSEH implies UGC.

Throughout this paper, let $\beta\diamond\gamma=\beta+\gamma-2\beta\gamma$, and $\epsilon$ generally denotes a negligible quantity.

\section{Conjectures on Unbalanced 3-XOR and 3-AND}
In this section, the author puts forward a variation of Feige's Hypothesis, which claims it is hard on average refuting Unbalanced Max 3-XOR under biased assignments on a natural distribution. We can
strengthen the previous known hardness for approximating Minimum Unique Game, $5/4-\epsilon$, by proving that Min 2Lin-2 is hard to approximate within $3/2-\epsilon$.

In {\it Max 3-XOR}, we are given a set of $XOR$ clauses, each clause contains exactly three literals. The goal of the problem is to seek an assignment of the Boolean variables such that the number of satisfied clauses is maximized. In {\it Max 3-AND}, we are given a set of $AND$ clauses, each clause contains exactly three literals. The goal of the problem is to seek an assignment of the Boolean variables such that the number of satisfied clauses is maximized.

In {\it Random Unbalanced Max 3-XOR}, we assume that formulas are generated by the following random process. Given parameters $n$ and $m$, each clause is generated independently at random by selecting the three variables in it independently at random and inserting the negative literal of the variable into the clause with probability $\beta<1/2$ and inserting the positive literal of the variable into the clause with probability $1-\beta$. $\beta$ is called {\it imbalance} of the instance, and the instance is called $\beta$-balanced. In addition, We are interested in the assignments such that the fraction of variables assigned to 0 is no more than $\gamma$, which is called {\it bias} of the assignments. In {\it Random Unbalanced Max 3-AND}, formulas are generated similarly, and we can define the notations, {\it imbalance} and {\it bias}, similarly.

In this paper, the author considers the average complexity of Random Unbalanced Max 3-XOR, and put forward a variation of Feige's Hypothesis\cite{f,a}.\\

\noindent{\bf Conjecture 1. }
{\it For every $0<\gamma<\beta<1/2$, for every fixed $\epsilon>0$, for $\Delta$ a sufficiently large constant independent of $n$, there is no polynomial time algorithm that refutes most $\beta$-balanced Max 3-XOR formulas with $n$ variables and $m=\Delta n$ clauses, but never refutes a $1-\epsilon$ satisfiable formula under $\gamma$-biased assignments.}\\

The author also considers the average complexity of Random Unbalanced Max 3-AND, and put forward the following conjecture.\\

\noindent{\bf Conjecture 2. }
{\it For every $0<\gamma<\beta<1/2$, for every fixed $\epsilon>0$, for $\Delta$ a sufficiently large constant independent of $n$, there is no polynomial time algorithm that refutes most $\beta$-balanced Max 3-AND formulas with $n$ variables and $m=\Delta n$ clauses, but never refutes a $1-\frac{3}{2}\beta\diamond\gamma-\epsilon$ satisfiable formula under $\gamma$-biased assignments.}

\begin{theorem}
Conjecture 1 implies Conjecture 2.
\end{theorem}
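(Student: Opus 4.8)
The plan is to argue by contraposition, following the usual template for deriving one Feige-type average-case hypothesis from another: a polynomial-time algorithm refuting random Max 3-AND is converted into one refuting random Max 3-XOR. Suppose then that $\mathcal{A}$ is a polynomial-time algorithm witnessing the failure of Conjecture 2 for some $0<\gamma<\beta<1/2$ and $\epsilon>0$, i.e. $\mathcal{A}$ refutes most $\beta$-balanced random Max 3-AND formulas with $m'=\Delta' n$ clauses but never refutes a formula possessing a $\gamma$-biased assignment that satisfies at least a $(1-\frac{3}{2}\alpha-\epsilon)$-fraction of its clauses. I would build a randomized polynomial-time reduction $R$ sending a Max 3-XOR instance $\psi$ on $n$ variables to a Max 3-AND instance $R(\psi)$ on the same variable set, and let $\mathcal{B}(\psi)=\mathcal{A}(R(\psi))$; the target is that $\mathcal{B}$ witnesses the failure of Conjecture 1 for parameters $(\gamma,\beta_0,\epsilon_0)$, with $\beta_0$ chosen so that the 3-AND formulas produced by $R$ are $\beta$-balanced and with the density $\Delta_0$ (a fixed fraction of $\Delta'$) still an arbitrarily large constant, as Conjecture 1 permits.

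Two properties of $R$ have to be checked. The first is completeness: when $\psi$ is drawn uniformly among $\beta_0$-balanced Max 3-XOR formulas with $\Delta_0 n$ clauses, $R(\psi)$ should be distributed (statistically close to) a uniform $\beta$-balanced Max 3-AND formula with $\Delta' n$ clauses. The natural way to force this is to have $R$ act independently on each clause, replacing a 3-XOR clause $\ell_1\oplus\ell_2\oplus\ell_3$ by a 3-AND clause (or a constant-size gadget of 3-AND clauses) whose literals are those of the original clause with the signs passed through a fixed randomized rule, calibrated so that on a uniformly random $\beta_0$-balanced input each output clause is a uniformly random $\beta$-balanced 3-AND clause and distinct clauses remain independent. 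Then $\mathcal{A}$ refutes $R(\psi)$ with high probability, so $\mathcal{B}$ refutes most random Max 3-XOR formulas.

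The second is soundness: whenever $\psi$ has a $\gamma$-biased assignment $\phi_0$ satisfying at least a $(1-\epsilon_0)$-fraction of its clauses, $R(\psi)$ should, with high probability over the coins of $R$, possess a $\gamma$-biased assignment satisfying at least a $(1-\frac{3}{2}\alpha-\epsilon)$-fraction of its clauses; then $\mathcal{A}$ does not refute $R(\psi)$, hence $\mathcal{B}$ does not refute $\psi$. This is where $\alpha=\beta+\gamma-2\beta\gamma$ — the probability that a single literal of a $\beta$-balanced clause is falsified by a $\gamma$-biased assignment — and the coefficient $\frac{3}{2}$ arise. The idea is to transfer $\phi_0$ (or a $\gamma$-biased assignment closely related to it) to $R(\psi)$, to show that a gadget obtained from a 3-XOR clause satisfied by $\phi_0$ leaves the transferred assignment unsatisfied on at most a $\frac{3}{2}\alpha$-fraction of its 3-AND clauses in expectation over the re-randomization carried out by $R$, and then to invoke a Chernoff bound over the $\Delta' n$ independent gadget instantiations to turn this per-clause estimate into an actual $\gamma$-biased assignment of $R(\psi)$ meeting the promised bound.

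The step I expect to be the main obstacle is designing the gadget and carrying out the soundness count so that exactly a $\frac{3}{2}\alpha$-fraction of clauses, and no more, is lost. Completeness essentially forces $R$ to leave the literals intact apart from sign re-randomization, while soundness demands that the 3-AND clauses it emits be satisfiable by an assignment derived from $\phi_0$ — yet a 3-XOR clause that $\phi_0$ satisfies may be witnessed by any one of its four satisfying patterns, and an adversarially chosen $\psi$ may force the least convenient pattern on every clause. Reconciling these two pressures — in particular deciding whether a purely clause-local rule suffices or a larger gadget is needed, and then confirming that the loss is precisely $\frac{3}{2}\alpha$ and is compatible with the constraints $\beta_0<1/2$ and ``$\Delta_0$ large'' inherited from Conjecture 1 — is the delicate part; once the gadget is fixed, the distributional identity for completeness and the union- and Chernoff-bound estimates for soundness are routine.
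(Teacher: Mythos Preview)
Your contrapositive framing is fine, but you are searching for a gadget that does not exist and is not needed: the reduction $R$ in the paper is simply the identity. One reads the very same formula $\phi$ as a 3-AND instance instead of a 3-XOR instance, keeping every clause and every literal sign untouched. Completeness is then tautological (the random $\beta$-balanced 3-XOR and 3-AND distributions over literal data are identical), so $\beta_0=\beta$ and $\Delta_0=\Delta'$ with no calibration required.

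All the content sits in soundness, and there you are missing the one-line combinatorial observation that yields the constant $\tfrac32\alpha$. A 3-XOR clause satisfied by an assignment $\psi$ has either three true literals or exactly one; equivalently, the number of literals it falsifies is $0$ or $2$. Now count globally. In a $\beta$-balanced formula with the expected literal-frequency profile (which holds with high probability once $\Delta$ is large), any $\gamma$-biased assignment falsifies about $3\alpha m$ literal occurrences in total, since $\alpha=\beta(1-\gamma)+\gamma(1-\beta)$ is precisely the chance that a $\beta$-biased literal is falsified by a $\gamma$-biased value. As almost every clause is XOR-satisfied and each contributes $0$ or $2$ to this total, at most $\tfrac32\alpha m+\epsilon m$ clauses can fall in the ``exactly one literal true'' case; the remaining $(1-\tfrac32\alpha-\epsilon)m$ clauses have all three literals true and are therefore satisfied as 3-AND clauses by the \emph{same} $\gamma$-biased $\psi$.

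Your stated obstacle --- that an adversarial $\psi$ might force the one-true-literal pattern on every clause --- is exactly what this global double count rules out: the adversary may distribute the falsified literals however it likes, but their total number cannot exceed $3\alpha m$. By introducing sign re-randomization to fight this adversary clause-by-clause you would destroy the fixed $\beta$-profile and the fixed $\gamma$-biased witness that make the global count work, which is why your gadget design feels stuck. Drop the gadget; take $R=\mathrm{id}$ and run the counting argument above.
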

\begin{proof}
We rewrite a formula of $\beta$-balanced Max 3-XOR to a formula of $\beta$-balanced Max 3-AND. If the formula of Max 3-XOR is random, then the formula of Max 3-AND is also random. If the formula of Max 3-XOR $\phi$ is $1-\epsilon$ satisfiable by $\gamma$-biased assignments, we show in the following that at least $1-\frac{3}{2}\beta\diamond\gamma-\epsilon$ fraction of clauses in $\phi$ have all the three literals satisfied.

On average, each positive literal has $3(1-\beta)\Delta$ appearance in $\phi$, and each negative literal has $3\beta\Delta$ appearance in $\phi$. When $\Delta$ is large enough, standard bounds on large deviations show that with high probability, all but an $\epsilon$ fraction of the occurrences of positive literals correspond to positive literals that appear between $(3(1-\beta)\pm\epsilon)\Delta$ times in $\phi$, and all but an $\epsilon$ fraction of the occurrences of negative literals correspond to negative literals that appear between $(3\beta\pm\epsilon)\Delta$ times in $\phi$.

If this does hold, observe that every $\gamma$-biased assignment $\psi$ does not satisfy on average at most $3(\beta(1-\gamma)+\gamma(1-\beta))+\epsilon$ variables per clause in $\phi$. It then follows that at most $\frac{3}{2}(\beta(1-\gamma)+\gamma(1-\beta))+\epsilon$ clauses have exactly one literal satisfied by $\psi$.
\end{proof}

\begin{theorem}
Conjecture 2 holds for any $0<\gamma<\beta<1/2$ implies $(c',s')$-approximation hardness of Min UG for $c'=\textstyle\frac{1}{2}\beta\diamond\gamma$ and $s'=\textstyle\frac{1}{4}(1-(1-\beta)^3)-\epsilon$.
\end{theorem}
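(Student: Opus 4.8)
The plan is to prove the theorem by a local gadget reduction from the hard promise problem of Conjecture 2 to the $(c',s')$-promise problem of Min UG. Fix $0<\gamma<\beta<1/2$. Given a $\beta$-balanced Max 3-AND formula $\phi$ on variables $x_1,\dots ,x_n$ with clauses $C_1,\dots ,C_m$, $C_t=\ell_{t1}\wedge\ell_{t2}\wedge\ell_{t3}$, I build a Min 2-Lin-2 instance $I$ (a special case of Min UG with $k=2$) on the variables $x_1,\dots ,x_n$ together with one fresh variable $z_t$ per clause, by adding for each clause $C_t$ the three $GF_2$ equations $\ell_{tj}+z_t=1$, $j=1,2,3$. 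Since $\ell_{tj}$ is $x_i$ or $x_i+1$, each of these is a genuine two-variable equation, so $I$ is a legal Min UG instance with $3m$ equations. The gadget is built so that its three equations can all be satisfied precisely when $\ell_{t1},\ell_{t2},\ell_{t3}$ receive a common value (take $z_t$ to be the complement of that value), and so that for \emph{every} assignment at least two of the three are satisfiable (set $z_t$ by majority vote on the literal values).

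For completeness I would start from a $\gamma$-biased assignment $\psi$ witnessing that $\phi$ is $1-\tfrac32\alpha-\epsilon$ satisfiable, i.e.\ at least a $1-\tfrac32\alpha-\epsilon$ fraction of the clauses have all three literals true under $\psi$. Extend $\psi$ to $I$ by giving each $z_t$ its per-clause optimal value. Then every clause whose literals are constant under $\psi$ (in particular every fully satisfied clause) contributes no unsatisfied equation, while each of the remaining clauses — of which there are at most $(\tfrac32\alpha+\epsilon)m$ — contributes exactly one. Hence $Val(I)\le \tfrac{(\tfrac32\alpha+\epsilon)m}{3m}=\tfrac12\alpha+\tfrac{\epsilon}{3}<c'+\epsilon$.

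For soundness I would take $\phi$ to be a uniformly random $\beta$-balanced formula with $m=\Delta n$, $\Delta$ a large constant. An assignment to $I$ is the same as an assignment $\psi$ of the $x$'s together with a choice of the $z_t$'s; for clause $C_t$ the number of satisfied equations equals the number of literals equal to $1+z_t$, which is $3$ if the three literals are monochromatic under $\psi$ and at most $2$ otherwise. So at most $(2+\mu(\psi))m$ equations are satisfied, where $\mu(\psi)$ is the fraction of clauses monochromatic under $\psi$. For a fixed $\psi$, a random clause is monochromatic with probability $r^{3}+(1-r)^{3}$, where $r\in[\beta,1-\beta]$ is the probability that a random literal of $\phi$ is true under $\psi$; this is maximized at the endpoints of $[\beta,1-\beta]$, so $\mathbb{E}\,\mu(\psi)\le (1-\beta)^{3}+\beta^{3}$. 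Since $\Delta$ is a large constant, a Chernoff bound followed by a union bound over all $2^{n}$ choices of $\psi$ gives $\mu(\psi)\le (1-\beta)^{3}+\beta^{3}+\epsilon$ for every $\psi$ simultaneously with high probability, so $Val(I)\ge \tfrac{1-(1-\beta)^{3}-\beta^{3}}{3}-\epsilon$. An elementary inequality, $\tfrac{1-(1-\beta)^{3}-\beta^{3}}{3}\ge \tfrac14\bigl(1-(1-\beta)^{3}\bigr)$ for all $\beta\in(0,1/2)$ (equivalently $1-(1-\beta)^{3}\ge 4\beta^{3}$), then yields $Val(I)>s'-\epsilon$.

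Combining the two cases, a polynomial-time algorithm for the $(c',s')$-promise problem of Min UG, composed with the reduction, would refute almost every random $\beta$-balanced 3-AND formula yet never refute a $1-\tfrac32\alpha-\epsilon$-$\gamma$-biased-satisfiable one, contradicting Conjecture 2. I expect the soundness step to be the main obstacle: the fresh variables $z_t$ give the adversary extra freedom, so it is not enough to know that random unbalanced 3-AND has small optimum — one must bound, uniformly over all exponentially many assignments, the fraction of clauses that become monochromatic in \emph{either} direction, which is exactly what forces $\Delta$ to be a large absolute constant, as Conjecture 2 permits. Finally, letting $\gamma\to 0$ and then $\beta\to 0$ inside $s'/c'$ recovers the advertised $3/2-\epsilon$ hardness for Min 2-Lin-2, and hence for Min UG.
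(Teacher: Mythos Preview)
Your argument is correct, but it proceeds via a genuinely different gadget from the paper's. The paper reduces each 3-AND clause to the twelve edge-equations of a three-dimensional Boolean cube, identifying four of the eight corners with a global anchor $w$ and with the three literal variables $u_1,u_2,u_3$; an enumeration shows that a satisfied clause leaves $0$ equations unsatisfied, while an unsatisfied clause forces between $3$ and $4$ of the $12$ equations to fail. Completeness then gives $(\tfrac32\alpha)\cdot\tfrac{4}{12}=\tfrac12\alpha$, and soundness uses only the standard bound that a random $\beta$-balanced 3-AND formula is at most $(1-\beta)^3+\epsilon$ satisfiable, yielding $(1-(1-\beta)^3)\cdot\tfrac{3}{12}=\tfrac14(1-(1-\beta)^3)$. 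The anchor $w$ is what lets the paper reduce the soundness analysis directly to the Max 3-AND optimum.

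Your three-equation ``star'' gadget is simpler (one auxiliary variable per clause, no anchor) and gives the same completeness $c'$. The price is that your gadget cannot distinguish ``all literals true'' from ``all literals false'', so your soundness needs a bound on the \emph{monochromatic} fraction rather than on the Max 3-AND value; you supply this with the extra Chernoff-plus-union-bound step over all $2^n$ assignments. That step is where your proof genuinely does work the paper avoids, but it is routine and, as you note, actually yields the sharper soundness $\tfrac13\bigl(1-(1-\beta)^3-\beta^3\bigr)$, which you then relax to the paper's $s'$ via the elementary inequality $1-(1-\beta)^3\ge 4\beta^3$ on $(0,1/2)$. Both routes are valid; yours is more elementary at the gadget level but needs a direct probabilistic argument, while the paper's cube gadget is heavier but lets the soundness piggyback on the off-the-shelf Max 3-AND bound.
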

\begin{proof}
We use the three-dimensional cube gadget that is similar to the gadgets used by authors of \cite{h}.

Let $l_{1}\wedge l_{2}\wedge  l_{3}$ be a clause in the formula of Max 3-AND, where $l_{i}$ is either a variable $x_{i}$ or its negation $\bar x_{i}$, for $i=1,2,3$. The set of equations we construct have variables at the corners of a three-dimensional cube, which take value $1$ or $-1$. For each $\mu\in\{0,1\}^3$, we have a variable $v_{\mu}$. The variable $v_{000}$ is replaced by $w$ taking value $-1$. We let $u_{1}$ take the place of $v_{011}$, $u_{2}$ the place of $v_{101}$, and $u_{3}$ the place of $v_{110}$, where $u_{i}=-1$ if $x_{i}=1$, and $u_{i}=1$ if $x_{i}=0$. For each edge $(w,v_{\mu})$ of the cube, we have the equation $wv_{\mu}=-1$. For each edge $(u_{i},v_{\mu})$ of the cube, we have the equation $u_{i}v_{\mu}=1$ if $l_{i}$ is positive, and the equation $u_{i}v_{\mu}=-1$ if $l_{i}$ is negative, for all $i=1,2,3$.

If all $l_{i}$ are satisfied in the clause, we assign $v_{\mu}$ the value $(-1)^{\mu_{1}+\mu_{2}+\mu_{3}}$. All the twelve edge equation are satisfied and left no equation unsatisfied. Otherwise, an enumeration establishes that it is only possible to satisfy at most nine equations and left three equations unsatisfied, and that it is always possible to satisfy at least eight equations and left four equations unsatisfied.

Given a $\beta$-balanced Max 3-AND formulas $\phi$ that is at most $(1-\beta)^3+\epsilon$ satisfiable, at least $1-(1-\beta)^3-\epsilon$ clauses in $\phi$ are unsatisfied.

Now we reduce a formula of $\beta$-balanced Max 3-AND to an instance of Min 2-Lin-2 using the gadget introduced above. If the formula is $1-\frac{3}{2}\beta\diamond\gamma-\epsilon$ satisfiable under $\gamma$-biased assignments, then the value of the instance of Min 2-Lin-2 is at most $$\textstyle\frac{3}{2}\beta\diamond\gamma+\epsilon-(\textstyle\frac{3}{2}\beta\diamond\gamma+\epsilon)\textstyle\frac{2}{3}=\textstyle\frac{1}{2}\beta\diamond\gamma+\epsilon.$$ If the formula is random, then it is at most $(1-\beta)^3+\epsilon$ satisfiable in high probability, which implies the value of the instance of Min 2-Lin-2 is at least $$1-(1-\beta)^3-\epsilon-(1-(1-\beta)^3-\epsilon)\textstyle\frac{3}{4}=\textstyle\frac{1}{4}(1-(1-\beta)^3)-\epsilon.$$
\end{proof}

\begin{corollary}
Conjecture 2 holds for arbitrarily small $\beta$ and $\gamma$ implies Min UG is hard to approximate within $3/2-\epsilon$.
\end{corollary}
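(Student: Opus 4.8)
The plan is to take the parameters in Theorem 2 to the limit $\beta,\gamma\to 0$. By Theorem 2, once Conjecture 2 is granted for a pair $0<\gamma<\beta<1/2$, Min UGP is $(c',s')$-approximation hard with $c'=\tfrac12\alpha$ and $s'=\tfrac14\bigl(1-(1-\beta)^3\bigr)-\epsilon$, where $\alpha=\beta+\gamma-2\beta\gamma$; moreover $0<c'<s'<1$ for small $\beta,\gamma$ since, as we will see, $s'/c'\approx 3/2$. A polynomial-time algorithm approximating Min UGP within a factor $\rho$ would decide the promise problem behind this hardness statement whenever $\rho(c'+\epsilon)<s'-\epsilon$, so it rules out every approximation factor strictly below
\[
\frac{s'}{c'}\;=\;\frac{\tfrac14\bigl(1-(1-\beta)^3\bigr)}{\tfrac12\alpha}\;=\;\frac{1-(1-\beta)^3}{2(\beta+\gamma-2\beta\gamma)},
\]
up to the negligible terms. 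So it suffices to show the right-hand side can be driven above $3/2-\epsilon$.

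First I would send $\gamma$ to $0$: for fixed $\beta$ the quantity above tends to $\dfrac{1-(1-\beta)^3}{2\beta}=\dfrac{3-3\beta+\beta^2}{2}=\tfrac32-\tfrac32\beta+\tfrac12\beta^2$, which in turn tends to $3/2$ as $\beta\to 0$ and is $>3/2-\epsilon$ once $\beta$ is small enough. Concretely, given the target $\epsilon>0$ I would pick $\beta$ small enough that $\tfrac{3-3\beta+\beta^2}{2}>3/2-\epsilon/3$, then pick $\gamma$ with $0<\gamma<\beta$ small enough that $\dfrac{1-(1-\beta)^3}{2(\beta+\gamma-2\beta\gamma)}>3/2-2\epsilon/3$ (the denominator is continuous in $\gamma$ and $\to 2\beta$ as $\gamma\to 0$), and finally absorb the residual negligible term of Theorem 2 into the remaining $\epsilon/3$. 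Invoking Theorem 2 for this pair $(\beta,\gamma)$ — which is legitimate exactly because Conjecture 2 is assumed for arbitrarily small $\beta$ and $\gamma$ — then yields inapproximability within $3/2-\epsilon$, and since $\epsilon>0$ was arbitrary this is the claim.

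I do not expect a real analytic obstacle: the whole argument is the elementary limit $\frac{1-(1-\beta)^3}{2\beta}\to\frac32$ combined with the freedom to send the bias $\gamma$ to $0$ independently of $\beta$. The one place to be careful is the logical structure — ``Conjecture 2 for arbitrarily small $\beta,\gamma$'' must be used as: for every $\epsilon>0$ we are allowed to instantiate the conjecture at one sufficiently small admissible pair, and for that pair one should verify the side condition $0<c'<s'<1$ of Theorem 2 (immediate from $s'/c'\approx 3/2>1$ and $\beta,\gamma$ small) before reading off the inapproximability factor. A secondary point worth stating cleanly is the passage from a $(c',s')$-approximation hardness to an inapproximability ratio, i.e.\ that ruling out the promise problem rules out any factor below $(s'-\epsilon)/(c'+\epsilon)$, whose limit as the negligible terms vanish is $s'/c'$.
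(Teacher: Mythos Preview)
Your proposal is correct and is exactly the intended derivation: the corollary is meant to follow immediately from Theorem~2 by letting $\gamma\to 0$ and then $\beta\to 0$ in the ratio $s'/c'=\dfrac{1-(1-\beta)^3}{2(\beta+\gamma-2\beta\gamma)}\to\dfrac{3-3\beta+\beta^2}{2}\to\dfrac32$. The paper does not spell out a separate proof for the corollary, so your limit computation and the accompanying bookkeeping (choosing $\beta$ then $\gamma$, checking $0<c'<s'<1$, and converting the gap into an inapproximability factor) are precisely what is being left to the reader.
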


\begin{lemma}
For an integer $k\ge 3$ and every $\epsilon>0$, there is some $\Delta_{\epsilon}>0$ such that for every $\Delta>\Delta_{\epsilon}$, $n$ large enough, and $0<\gamma<\beta<1/2$, with high probability the following holds. Every set of $((1-\beta\diamond\gamma)^k+\epsilon)m$ clauses in a random $\beta$-balanced Max 3-AND formula with $m=\Delta n$ clauses contains at least $\gamma n+1$ different negative literals or $(1-\gamma)n+1$ different positive literals.
\end{lemma}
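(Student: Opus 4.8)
The plan is to argue by contraposition, and — crucially — to replace a union bound over clause sets (of which there are $\binom{m}{s}=e^{\Theta(\Delta n)}$, far too many) by a union bound over \emph{literal sets}, of which there are only $e^{O(n)}$. Suppose the conclusion fails for some realization of the formula: there is a family $\mathcal{S}$ of $s:=((1-\alpha)^{k}+\epsilon)m$ clauses using at most $\gamma n$ distinct negative literals and at most $(1-\gamma)n$ distinct positive literals. Extend the set of positive literals occurring in $\mathcal{S}$ to a set $P$ of exactly $\lfloor(1-\gamma)n\rfloor$ positive literals, and the negative literals of $\mathcal{S}$ to a set $N$ of exactly $\lfloor\gamma n\rfloor$ negative literals; say such an $L:=P\cup N$ \emph{captures} a clause if all of the clause's literals lie in $L$. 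Then $L$ captures at least $s$ clauses of the formula, so it suffices to show that with high probability \emph{no} literal set $L$ of this form captures $s$ or more clauses.

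Fix one such $L$. Since the $m$ clauses are drawn independently, the indicators ``clause $j$ is captured by $L$'' are i.i.d.\ Bernoulli. For a single clause, a given literal slot lies in $L$ with probability $(1-\beta)\cdot\frac{|P|}{n}+\beta\cdot\frac{|N|}{n}=(1-\beta)(1-\gamma)+\beta\gamma=1-\alpha$ up to an $O(1/n)$ rounding term, so (the slots being independent) the clause is captured with probability $q=(1-\alpha)^{k}+o(1)$, one factor $1-\alpha$ per literal, matching the stated threshold. Hence the number of clauses captured by $L$ is $\mathrm{Bin}(m,q)$ with mean $qm$, and since $s-qm=(\epsilon-o(1))m$, Hoeffding's inequality gives that $L$ captures at least $s$ clauses with probability at most $\exp(-2(\epsilon-o(1))^{2}m)\le\exp(-\tfrac{1}{2}\epsilon^{2}\Delta n)$ once $n$ is large. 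Taking a union bound over the at most $\binom{n}{\lfloor(1-\gamma)n\rfloor}\binom{n}{\lfloor\gamma n\rfloor}\le 4^{n}$ choices of $L$, the probability that any bad $L$ exists is at most $\exp\bigl((\ln 4-\tfrac12\epsilon^{2}\Delta)n\bigr)$, which tends to $0$ as soon as $\Delta>\Delta_{\epsilon}$ with $\Delta_{\epsilon}$ an absolute constant multiple of $\epsilon^{-2}$; this $\Delta_{\epsilon}$ depends only on $\epsilon$ (not on $n$), and the estimate persists for every larger $\Delta$. This yields the lemma, with failure probability $e^{-\Omega(n)}$.

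The only genuine obstacle is the one already isolated: one must not union bound directly over the subfamilies $\mathcal{S}$, because a fixed $\mathcal{S}$ fails to be captured by its own literal support only with a mildly small probability and there are exponentially-in-$\Delta n$ many candidates; the gain comes entirely from the observation that every bad $\mathcal{S}$ is witnessed by one of only $4^{n}$ literal sets $L$, after which it is the routine ``a large enough constant $\Delta$ beats $e^{O(n)}$'' calculation. Minor care is needed only for the integer rounding of $\gamma n$ and $(1-\gamma)n$ (absorbed into the $o(1)$ term), and for the remark that shrinking $P$ or $N$ below the stated sizes only \emph{decreases} the capture probability, so it is indeed without loss of generality to restrict attention to literal sets of exactly these sizes.
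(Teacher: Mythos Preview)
Your proof is correct and follows essentially the same approach as the paper: both argue by fixing a literal set $L$ (equivalently, the paper fixes its complement $S$ ``to be avoided''), compute the single-clause capture probability as $(1-\alpha)^{k}$, apply a concentration bound to the $\mathrm{Bin}(m,q)$ count, and then union-bound over the $e^{O(n)}$ choices of $L$. Your write-up is in fact more careful than the paper's sketch (explicit Hoeffding constant, explicit $\Delta_{\epsilon}=O(\epsilon^{-2})$, handling of rounding), but the method is the same.
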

\begin{proof}
Fix a set $S$ of $n$ literals with exactly $\gamma$ fraction of positive literals to be avoided. The probability that a random clause with three literals avoids these literals is $(1-\beta\diamond\gamma)^3$. For large enough $\Delta$, standard bounds on large deviations implies that with probability greater than $1-(1-\beta\diamond\gamma)^{3n}$, less than $((1-\beta\diamond\gamma)^k+\epsilon)m$ random clauses avoid the set $S$. As there are roughly $2^{2\gamma n}$ ways of choosing the set $S$, the union bound implies that on one of them is avoided by a set of $((1-\beta\diamond\gamma)^k+\epsilon)m$ clauses.
\end{proof}

\begin{theorem}
Conjecture 2 holds for any $0<\gamma<\beta<1/2$ implies Small Set Expansion is hard to approximate within $\frac{2(1-(1-\beta)^3)}{\frac{3}{2}\beta\diamond\gamma}-1-\epsilon$.
\end{theorem}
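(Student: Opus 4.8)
The plan is to reduce Random Unbalanced Max 3-AND (the problem of Conjecture 2) to Min Bisection, which is the $\delta=1/2$ case of Small Set Expansion; a $c$-approximation for SSE applied to $\delta=1/2$ instances is a $c$-approximation for Min Bisection, so any hardness ratio proved for Min Bisection is inherited by SSE. The reduction follows the template of Feige's reduction from refuting random 3-SAT to Min Bisection \cite{f}, the two differences being that it starts from the biased/unbalanced formulas of Conjecture 2 and that the combinatorial heart of the soundness analysis is supplied by Lemma 1 in place of Feige's unbiased expansion estimate. As a sanity check, putting $\beta=\tfrac12$ (whence $\alpha=\tfrac12$) makes the target ratio $\tfrac{2(1-1/8)}{3/4}-1=\tfrac43$, recovering Feige's bound.

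From a $\beta$-balanced Max 3-AND formula $\phi$ with $m=\Delta n$ clauses I would first apply to every clause $l_1\wedge l_2\wedge l_3$ the three-dimensional cube construction used in the proof of Theorem 3, obtaining a Min 2-Lin-2 instance whose variables sit on cube corners with the corner variables $w,u_1,u_2,u_3$ shared across clauses; then pass to the label-extended graph, creating two copies $(x,0),(x,1)$ of every 2-Lin-2 variable and, for each equation $xy=b$, the two edges joining the copies as $b$ dictates; and finally add a controlled amount of padding so that balanced partitions of the resulting graph $G_\phi$ are well behaved. A balanced partition that picks exactly one copy of every variable is an assignment of the 2-Lin-2 instance, and it cuts exactly twice as many edges as it leaves equations unsatisfied.

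For completeness, if $\phi$ is $1-\tfrac32\alpha-\epsilon$ satisfiable under a $\gamma$-biased assignment $\psi$, then by the proof of Theorem 1 all but a $\tfrac32\alpha+\epsilon$ fraction of clauses have all three literals satisfied, so their cube gadgets leave no equation unsatisfied; extending $\psi$ to a bisection of $G_\phi$ and using the padding to restore balance yields a bisection of expansion at most $\tfrac32\alpha+\epsilon$. For soundness, let $\phi$ be random. A bisection $S$ induces a partial assignment $f$ on the variables whose two copies it separates, together with the set $A$ of \emph{ambiguous} variables whose copies lie both in $S$ or both out of $S$, and balance pins $|A|$ down up to the padding. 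If $A$ is small, $S$ is essentially an assignment of the whole 2-Lin-2 instance; since whp $\phi$ is at most $(1-\beta)^3+\epsilon$ satisfiable a constant fraction of the clause gadgets must be cut, and the interplay of this with the balance constraint is what produces the factor $2$ in front of $1-(1-\beta)^3$. If $A$ is large, Lemma 1 applies: the clause gadgets that an ambiguous bisection can neutralise correspond to clauses all avoiding one fixed $\gamma$-biased literal pattern, hence number at most $((1-\alpha)^3+\epsilon)m$, and since $\alpha\ge\beta$ this regime is no better for the adversary. In both cases the expansion of $S$ is at least $2(1-(1-\beta)^3)-\tfrac32\alpha-\epsilon$, so distinguishing YES from NO amounts to approximating Min Bisection (hence SSE) within $\tfrac{2(1-(1-\beta)^3)-\tfrac32\alpha-\epsilon}{\tfrac32\alpha}=\tfrac{2(1-(1-\beta)^3)}{\tfrac32\alpha}-1-\epsilon$.

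The step I expect to be the real obstacle is the soundness analysis of spurious (non-assignment) bisections. One must show that for a random $\phi$ no balanced partition can neutralise many clause gadgets — not by hiding their vertices among the ambiguous set $A$, nor by exploiting a sparse bisection of the shared cube-corner skeleton — and one must do so with constants sharp enough that both the ambiguous regime (governed by $(1-\alpha)^3$ through Lemma 1) and the assignment regime (governed by $(1-\beta)^3$ through unsatisfiability of random formulas) clear the target $2(1-(1-\beta)^3)-\tfrac32\alpha$; carrying out that bookkeeping, and choosing the padding so that the completeness value is exactly $\tfrac32\alpha$, is the delicate part of the argument.
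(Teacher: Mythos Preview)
Your plan routes the reduction through the cube gadget of Theorem~2 and then takes the label-extended graph of the resulting Min 2-Lin-2 instance. This is a genuine gap: that route cannot yield the ratio in the statement. The label-extended-graph reduction is ratio-preserving, so the Min Bisection gap you obtain is exactly the Min 2-Lin-2 gap of Theorem~2, namely completeness $\tfrac12\alpha$ versus soundness $\tfrac14(1-(1-\beta)^3)$, for a ratio of $(1-(1-\beta)^3)/(2\alpha)$. As $\beta,\gamma\to 0$ this tends to $3/2$, not $3$; you would be reproving Corollary~1 for Min Bisection rather than Corollary~2. The sentence ``the interplay of this with the balance constraint is what produces the factor $2$'' is where the argument breaks: in the label-extended graph a bisection that picks one copy of each variable is already balanced, so the balance constraint gives you nothing beyond the 2-Lin-2 value, and no extra factor of $2$ appears. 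Likewise your completeness claim of $\tfrac32\alpha$ is off by the cube-gadget normalisation; the 2-Lin-2 completeness is $\tfrac12\alpha$, and that is what the label-extended bisection inherits.

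The paper's proof bypasses the cube gadget entirely and follows Feige's bipartite template directly: LHS vertices are the $2n'$ literals, RHS is one clique-cluster per clause with a single connecting vertex, bipartite edges join each literal to the clauses containing it, and an extra clique of size $(1-3\alpha+\epsilon)m'^2$ forces the bisection to take exactly $n'$ literals and $(1-\tfrac32\alpha)m'$ clusters. Completeness picks the $\gamma$-biased satisfying literals and the satisfied clusters; the width is the degree of chosen literals minus the bipartite edges inside, i.e.\ $3(1-\alpha)m'-3(1-\tfrac32\alpha)m'=\tfrac32\alpha m'$. Soundness invokes Lemma~1 on the bipartite structure itself: whatever $n'$ literals are chosen, at most $((1-\alpha)^3+\epsilon)m'$ of the chosen clusters are $3$-connected to them, the remaining $(1-\tfrac32\alpha-(1-\alpha)^3)m'$ are at best $2$-connected, and counting the cut bipartite edges gives width at least $3(1-\alpha)m'-3(1-\alpha)^3m'-(1-\tfrac32\alpha-(1-\alpha)^3)m'=2(1-(1-\alpha)^3)-\tfrac32\alpha$, which is then bounded below using $\alpha\ge\beta$. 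The factor $2$ you were looking for comes from this direct edge count (each not-fully-connected chosen cluster contributes a cut edge on the cluster side \emph{and} the corresponding literal degree is not absorbed), not from any label-extended balance argument. The cube gadget's $3$-out-of-$12$ versus $4$-out-of-$12$ slack is precisely what kills this factor in your route.
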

\begin{proof}
We reduce $\beta$-balanced Max 3-AND to Min Bisection. Given a Max 3-AND formula with $n'$ variables and $m'=\Delta n'$ clauses in which we want to distinguish between the case at most $((1-\beta)^3+\epsilon)m'$ clauses are satisfiable and the case that at least $(1-\frac{3}{2}\beta\diamond\gamma+\epsilon)m'$ clauses are satisfiable by $\gamma$-biased assignments, construct the following graph.

The left hand side (LHS) contains $2n'$ vertices, one for each literal. The right hand side (RHS) contains $m'$ clusters, one for each clause, where each cluster is a clique of size $m'$. In addition, the graph contains a clique of size $m''=(1-3\beta\diamond\gamma+\epsilon)m'^2$. In each cluster there is a unique vertex that is a "connecting vertex". Place an edge between a vertex that corresponds a literal and the connecting vertex of a cluster if the literal is in the clause that corresponds the cluster. These are called the "bipartite" edges.

In this graph, find a minimum bisection, which contains exactly $n'$ LHS vertices, and $(1-\frac{3}{2}\beta\diamond\gamma-\epsilon)m'$ clusters. It suffices to consider only the connecting vertices from each of the $m'$ clusters, and we need to find a cut of minimum width that contains $n'$ vertices from the LHS, and $(1-\frac{3}{2}\beta\diamond\gamma-\epsilon)m'$ connecting vertices.

When the 3-AND formula has $(1-\frac{3}{2}\beta\diamond\gamma-\epsilon)m'$ satisfiable clauses by $\gamma$-biased assignments, we pick the set $S$ to contain the clauses corresponding to these clauses and the $n'$ literals corresponding to the assignments consistent with these clauses. The only edges cut by this bisection connect the satisfying literals to unsatisfied clauses. The number of bipartite edges within the set $S$ is $3(1-\frac{3}{2}\beta\diamond\gamma-\epsilon)m'$. The sum of degrees of the satisfied literals is $3(1-\beta\diamond\gamma)m'$. Hence the width of the bisection is $\frac{3}{2}\beta\diamond\gamma+\epsilon$.

In a random 3-AND formula, we still need one side of the cut to contain $n'$ vertices and $(1-\frac{3}{2}\beta\diamond\gamma-\epsilon)m'$ clusters. This set of $n'$ literals has at most $((1-\beta\diamond\gamma)^3+\epsilon)m'$ of these clauses 3-connected to it (by Lemma 1) and the other $(1-\frac{3}{2}\beta\diamond\gamma-(1-\beta\diamond\gamma)^3-2\epsilon)m'$ clauses are 2-connected to it. Hence the width of the cut is at least

\begin{equation*}
\begin{split}
&3(1-\beta\diamond\gamma)m'-3((1-\beta\diamond\gamma)^3+\epsilon)m'-(1-\textstyle\frac{3}{2}\beta\diamond\gamma-(1-\beta\diamond\gamma)^3-2\epsilon)m'\\
&=(2(1-(1-\beta\diamond\gamma)^3)-\textstyle\frac{3}{2}\beta\diamond\gamma-\epsilon)m'\\
&\ge(2(1-(1-\beta)^3)-\frac{3}{2}\beta\diamond\gamma-\epsilon)m'.
\end{split}
\end{equation*}
\end{proof}

\begin{corollary}
Conjecture 2 holds for and arbitrarily small $\gamma$ and $\beta$ implies SSE is hard to approximate within $3-\epsilon$.
\end{corollary}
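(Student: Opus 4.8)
The plan is to derive the corollary as a limiting case of Theorem~4 by sending both the imbalance $\beta$ and the bias $\gamma$ to zero, with $\gamma$ going to zero faster than $\beta$. Theorem~4 asserts, for every admissible pair $0<\gamma<\beta<1/2$, that SSE is hard to approximate within
\[
r(\beta,\gamma)=\frac{2\bigl(1-(1-\beta)^3\bigr)}{\tfrac{3}{2}\alpha}-1-\epsilon,
\qquad \alpha=\beta+\gamma-2\beta\gamma ;
\]
here the SSE hardness is obtained through Min Bisection, which is the $\delta=1/2$ special case of SSE. So it suffices to show that, over admissible pairs, $r(\beta,\gamma)$ can be made larger than $3-\epsilon'$ for every prescribed $\epsilon'>0$, after which the assertion ``hard to approximate within $3-\epsilon$'' follows by absorbing the small gap into the negligible quantity.

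First I would rewrite $r$ in the factored form
\[
r(\beta,\gamma)=\frac{4\bigl(1-(1-\beta)^3\bigr)}{3\beta}\cdot\frac{\beta}{\alpha}-1 .
\]
Using the exact identity $1-(1-\beta)^3=3\beta-3\beta^2+\beta^3$, the first factor equals $\tfrac{4}{3}(3-3\beta+\beta^2)$, which tends to $4$ as $\beta\to 0$ while remaining below $4$ for $\beta>0$. For the second factor, since $\alpha-\beta=\gamma(1-2\beta)>0$ whenever $0<\gamma$ and $\beta<1/2$, we have $\beta/\alpha<1$, and $\beta/\alpha\to 1$ as $\gamma\to 0$ with $\beta$ held fixed. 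Consequently, given any $\epsilon'>0$, I would first choose $\beta$ small enough that $\tfrac{4}{3}(3-3\beta+\beta^2)>4-\epsilon'/2$, then choose $\gamma\in(0,\beta)$ small enough that $\beta/\alpha>1-\epsilon'/8$; for this pair $r(\beta,\gamma)>3-\epsilon'$. Invoking Theorem~4 for this $(\beta,\gamma)$ — legitimate precisely because Conjecture~2 is assumed to hold for arbitrarily small $\gamma$ and $\beta$ — yields the stated hardness. Intuitively the constant is exactly $3$ because, in the doubly-small regime, $1-(1-\beta)^3\approx 3\beta$ and $\alpha\approx\beta$, so the ratio is $\tfrac{2\cdot 3\beta}{(3/2)\beta}-1=4-1=3$.

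I do not expect a genuine obstacle; the two points requiring care are (i) keeping $\gamma$ strictly positive and strictly below $\beta$ throughout the limiting process, so that the hypotheses of Theorem~4 and Conjecture~2 remain genuinely satisfied, and (ii) collating the several $\epsilon$-type losses — the negligible quantity already present in Theorem~4 together with the two approximation errors incurred in the two limits above — into a single negligible $\epsilon$. It is also worth noting in passing that both factors above are strictly less than their limits, so $r(\beta,\gamma)<3$ always; the method therefore cannot push the SSE hardness strictly past $3$, which is consistent with the limitation discussed in the abstract.
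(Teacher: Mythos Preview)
Your proposal is correct and follows exactly the approach implicit in the paper: the corollary is stated there without proof, as it is the immediate limit of the preceding theorem's ratio $\frac{2(1-(1-\beta)^3)}{(3/2)\alpha}-1$ when $\gamma,\beta\to 0$ with $\gamma=o(\beta)$. One cosmetic point: in the paper's numbering the theorem you invoke is Theorem~3, not Theorem~4.
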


\section{Conjectures on Unbalanced k-CSP}
In this section, the author discusses the limitation of our method. Conjecture 1 can be generalized to that it is hard on average refuting Unbalanced Max k-CSP with Samorodnitsky-Trevisan hypergraph predicate under biased assignments on a natural distribution. The largest strengthened hardness of Min 2-Lin-2 that Conjecture 3 can yield is $2-\kappa$ where $\kappa$ is a small absolute positive. However, the author also shows that Conjecture 3 is not true for sufficiently large $k$. Hence, we cannot further strengthen the hardness for approximating Minimum Unique Game to $\omega_k(1)$, by proving that Min 2-Lin-2 is hard to approximate within any constant assuming Conjecture 3.

Let $k=2^r-1$. The Samorodnitsky-Trevisan hypergraph predicate\cite{st} of arity $k$ is the dual Hamming code (or truncated Hadamark code) $C$ of block length $k$ and dimension $r$ over $GF[2]$. If we index the position of a codeword $c=(c_S)_{\emptyset\ne S\subseteq[r]}$ by nonempty subsets $S$ of $[r]$, the codewords are given by $$C=\{c=(1-\sum_{i\in S}{y_i})_{\emptyset\ne S\subseteq[r]}|y_1,\cdots,y_r\in\mathbb{Z}_2\}.$$

Note that for every $c\in C$ and $c\ne(1,\cdots,1)$, the number of 1 in elements of $c$ is at least $(k+1)/2$.

Let $C$ be a k-ary predicate. In {\it Max $C$}, we are given a set of clauses, each clause contains exactly $k$ literals. A clause is satisfied if the values of literals satisfies $C$. The goal of the problem is to seek an assignment of the Boolean variables such that the number of satisfied clauses is maximized. We consider the case Max $C$ where $C$ is Samorodnitsiky-Trevisan hypergraph predicate and the case Max k-AND where $C$ is the predicate with one satisfying k-tuple $(1,\cdots,1)$, where $k=2^r-1$.

In {\it Random Unbalanced Max $C$}, we still assume that formulas are generated by the following random process. Given parameters $n$ and $m$, each clause is generated independently at random by selecting the $k$ variables in it independently at random and inserting the negative literal of the variable into the clause with probability $\beta<1/2$ and inserting the positive literal of the variable into the clause with probability $1-\beta$. $\beta$ is called {\it imbalance} of the instance, and the instance is called $\beta$-balanced. In addition, We are interested in the assignments such that the fraction of variables assigned to 0 is no more than $\gamma$, which is called {\it bias} of the assignments. We are also interested in the two cases: Max $C$ where $C$ is Samorodnitsky-Trevisan hypergraph predicate and Max k-AND where $C$ is the predicate with one satisfying k-tuple $(1,\cdots,1)$, where $k=2^r-1$.

In this section, the author considers the average complexity of Random Unbalanced Max $C$ with Samorodnitsky-Trevisan hypergraph predicate, and puts forward a variation of Feige's Hypothesis\cite{f,a}.\\

\noindent{\bf Conjecture 3. }
{\it Let $C$ be $k$-ary Samorodnitsky-Trevisan hypergraph predicate with $k=2^r-1$. For every $0<\gamma<\beta<1/2$, for every fixed $\epsilon>0$, for $\Delta$ a sufficiently large constant independent of $n$, there is no polynomial time algorithm that refutes most $\beta$-balanced Max $C$ formulas with $n$ variables and $m=\Delta n$ clauses, but never refutes a $1-\epsilon$ satisfiable formula under $\gamma$-biased assignments.}\\

The author also considers the average complexity of Random Unbalanced Max k-AND, and put forward the following conjecture.\\

\noindent{\bf Conjecture 4. }
{\it Suppose $k=2^r-1$. For every $0<\gamma<\beta<1/2$, for every fixed $\epsilon>0$, for $\Delta$ a sufficiently large constant independent of $n$, there is no polynomial time algorithm that refutes most $\beta$-balanced Max k-AND formulas with $n$ variables and $m=\Delta n$ clauses, but never refutes a $1-2\beta\diamond\gamma-\epsilon$ satisfiable formula under $\gamma$-biased assignments.}\\

We can prove that Conjecture 3 implies Conjecture 4 similarly as proof of Theorem 1.

\begin{theorem}
Conjecture 3 implies Conjecture 4.
\end{theorem}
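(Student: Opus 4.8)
The plan is to follow the proof of Theorem 1 line for line, with $k$ in place of $3$ and with a suitably chosen balanced pairwise independent predicate $C$ in place of the 3-XOR predicate. First I would fix a $k$-ary balanced pairwise independent predicate $C$ whose support contains the all-ones tuple $(1,\cdots,1)$ and all of whose other satisfying tuples have at most $k/2$ ones; for $k=2^{\ell}-1$ the complemented Hadamard predicate does this (its non-all-ones codewords having exactly $(k-1)/2$ ones), and for $k=3$ this is precisely the 3-XOR predicate used in Theorem 1. Since Conjecture 3 is quantified over \emph{all} balanced pairwise independent predicates, it applies to this particular $C$. Given a $\beta$-balanced Max $C$ formula $\phi$, I rewrite it to a $\beta$-balanced Max $k$-AND formula by keeping the same $k$ literals in each clause and replacing the constraint ``$C$ is satisfied'' by ``all $k$ literals are true''. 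This map is a bijection on formulas that carries the random model to the random model, so a random Max $C$ formula becomes a random Max $k$-AND formula; it remains to show that if $\phi$ is $1-\epsilon$ satisfiable under a $\gamma$-biased assignment $\psi$, then at least a $1-2\alpha-\epsilon$ fraction of its clauses are all-true under $\psi$.

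For that, I would reuse the occurrence-count argument of Theorem 1 verbatim: on average a positive literal occurs $k(1-\beta)\Delta$ times and a negative literal $k\beta\Delta$ times, so for $\Delta$ large, with high probability all but an $\epsilon$ fraction of literal occurrences sit on literals occurring within $(k(1-\beta)\pm\epsilon)\Delta$, respectively $(k\beta\pm\epsilon)\Delta$, times. On this event any $\gamma$-biased assignment leaves unsatisfied at most $k\bigl(\beta(1-\gamma)+\gamma(1-\beta)\bigr)+\epsilon=k\alpha+\epsilon$ literal occurrences per clause on average, using $\beta(1-\gamma)+\gamma(1-\beta)=\alpha$ and the fact that, since $\beta<1/2$, the worst a $\gamma$-biased $\psi$ can do is to set a full $\gamma$ fraction of variables to $0$.

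It remains to turn this per-clause budget into a count of non-all-true clauses. Because $\psi$ satisfies all but an $\epsilon$ fraction of the $C$-constraints, all but an $\epsilon$ fraction of clauses evaluate under $\psi$ to a satisfying tuple of $C$; by the choice of $C$, any such clause is either all-true or has at least $k/2$ false literals. Summing, the number of clauses that are $C$-satisfied but not all-true is at most $(k\alpha+\epsilon)m/(k/2)\le(2\alpha+\epsilon)m$, and at most a further $\epsilon m$ clauses fail their $C$-constraint; hence at least a $1-2\alpha-\epsilon$ fraction of clauses are all-true under $\psi$. Thus the Max $k$-AND image of a Max $C$ formula that is $1-\epsilon$ satisfiable under $\gamma$-bias is $1-2\alpha-\epsilon$ satisfiable under $\gamma$-bias, and a polynomial-time algorithm witnessing that Conjecture 4 is false would, by composition with this bijection, yield a refuter contradicting Conjecture 3 for the predicate $C$.

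The one genuinely new ingredient — everything else is a transcription of the proof of Theorem 1 — is the choice of $C$ in the first step: one needs, for the relevant $k$, a balanced pairwise independent predicate containing $(1,\cdots,1)$ with no other satisfying tuple of weight above $k/2$. This is where I expect the work to lie, since balanced pairwise independence by itself does not force it (a large-support balanced pairwise independent predicate may contain, besides $(1,\cdots,1)$, tuples of weight as large as $k-1$); the cleanest route is to invoke the Hadamard-type construction for $k=2^{\ell}-1$ and, if Conjecture 4 is wanted for every $k$, either to supply an explicit balanced pairwise independent ``code-plus-all-ones'' predicate for the remaining $k$, or to observe that hardness along the subsequence $k=2^{\ell}-1$ already suffices for the intended applications.
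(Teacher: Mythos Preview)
Your proposal is correct and follows the same line as the paper's own proof: rewrite Max $C$ as Max $k$-AND, use concentration of literal occurrences to bound the average number of false literals per clause at $k\alpha+\epsilon$, then divide by $k/2$ to bound the fraction of $C$-satisfied but not all-true clauses. The one place you go beyond the paper is exactly where you say the work lies: the paper's proof opens with the bare assertion ``We assume a clause in $\phi$ is satisfied, either it has all the $k$ literals satisfied, or it has at least $k/2$ literals unsatisfied'' and never specifies which $C$ makes this true, whereas you supply the Hadamard-type construction for $k=2^\ell-1$ and correctly flag that an arbitrary balanced pairwise independent predicate need not have this property.
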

\begin{proof}
We rewrite a formula of $\beta$-balanced Max $C$ to a formula of $\beta$-balanced Max k-AND. If the formula of Max $C$ is random, then the formula of Max 3-AND is also random. If the formula of Max $C$ $\phi$ is $1-\epsilon$ satisfiable by $\gamma$-biased assignments, we show in the following that at least $1-2\beta\diamond\gamma-\epsilon$ fraction of clauses in $\phi$ have all the $k$ literals satisfied.

On average, each positive literal has $k(1-\beta)\Delta$ appearance in $\phi$, and each negative literal has $k\beta\Delta$ appearance in $\phi$. When $\Delta$ is large enough, standard bounds on large deviations show that with high probability, all but an $\epsilon$ fraction of the occurrences of positive literals correspond to positive literals that appear between $(k(1-\beta)\pm\epsilon)\Delta$ times in $\phi$, and all but an $\epsilon$ fraction of the occurrences of negative literals correspond to negative literals that appear between $(k\beta\pm\epsilon)\Delta$ times in $\phi$.

Observe that every $\gamma$-biased assignment $\psi$ does not satisfy on average at most $k(\beta(1-\gamma)+\gamma(1-\beta))+\epsilon$ variables per clause in $\phi$. It then follows that at most $2(\beta(1-\gamma)+\gamma(1-\beta))+\epsilon$ clauses have at least $k/2$ literals unsatisfied by $\psi$.
\end{proof}

\begin{theorem}
Conjecture 4 holds for any $0<\gamma<\beta<1/2$ implies $(c',s')$-approximation hardness of Min UG for $c'=O(1/r)\beta\diamond\gamma$ and $s'=\Omega(1/k)(1-(1-\beta)^k)-o_k(\beta))$.
\end{theorem}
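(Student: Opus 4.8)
The plan is to imitate the proof of Theorem~2, replacing its three-dimensional cube gadget by a $d$-dimensional hypercube gadget with $d=\Theta(\log k)$, and thereby reduce $\beta$-balanced Max $k$-AND --- made hard on average against $\gamma$-biased assignments by Conjecture~4 --- to Min 2-Lin-2, which is a special case of Min UG.

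First I would set up the gadget. For each $k$-AND clause $l_1\wedge\cdots\wedge l_k$ I would take a fresh copy of the cube $\{0,1\}^d$ with $d=\lceil\log_2(k+1)\rceil+O(1)$, whose corners carry $\pm1$ variables, pin the origin corner to the constant $w=-1$, let $k$ further corners play the roles of the literal variables $u_1,\dots,u_k$ (identified across clauses with the formula variables $x_i$, exactly as in Theorem~2), and keep the remaining corners as fresh per-clause variables. On each of the $N=d\,2^{d-1}=\Theta(k\log k)$ cube edges I would place a 2-Lin-2 equation whose sign is chosen --- as in Theorem~2 and \cite{tssw} --- so that a ``parity'' assignment of the cube satisfies all $N$ equations exactly when every $l_i$ is true. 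I would then record the two probabilistic facts that carry over from the $k=3$ case: a random $\beta$-balanced Max $k$-AND instance is with high probability at most $((1-\beta)^k+\epsilon)$-satisfiable by \emph{any} assignment (a large-deviation estimate followed by a union bound over the $2^n$ assignments, valid once $\Delta$ is a large enough constant), while a $(1-2\alpha-\epsilon)$-satisfiable instance has at most a $(2\alpha+\epsilon)$-fraction of clauses not fully satisfied by its good $\gamma$-biased assignment.

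The analysis then splits into two per-clause estimates. A cube edge $x_ax_b=c_{ab}$ is satisfied exactly when $\delta_a=\delta_b$, where $\delta_v:=x_v\,x_v^{\mathrm{compl}}$ records whether corner $v$ is flipped relative to the completeness pattern; hence the number of equations a clause leaves unsatisfied is the edge boundary $|\partial T|$ of the flipped set $T=\{v:\delta_v=-1\}$, minimised over all $T$ that contain every corner carrying a false literal and exclude $w$ together with every corner carrying a true literal. For soundness, a clause that is not fully satisfied has such a $T$ separating a non-empty vertex set from $w$ inside the $d$-edge-connected cube, so $|\partial T|\ge d$; a random instance therefore forces on average at least $(1-(1-\beta)^k-\epsilon)\,d$ unsatisfied equations per clause, and dividing by $N$ gives $s'=\Omega(1/k)(1-(1-\beta)^k)-o_k(\beta)$, the $o_k(\beta)$ absorbing the integrality of $d$ and the slack in $d/N=\Theta(1/k)$. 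For completeness I would need the embedding of the literals into corners to be such that, for every subset $F\subseteq\{u_1,\dots,u_k\}$ of literals a $\gamma$-biased assignment might falsify, the flipped set can be enlarged to a set $T$ with $|\partial T|=O(N/\log k)=O(k)$; multiplying by the $(2\alpha+\epsilon)$-fraction of clauses that are not fully satisfied (and handling the negligible $\epsilon$-fraction of exotic clauses with the trivial bound $|\partial T|\le N$) and dividing by $N$ then yields $c'=O(1/\log k)\,\alpha+\epsilon$.

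The hard part will be exactly this completeness estimate: choosing an embedding of the $k$ literals into cube corners for which \emph{every} admissible flipped set is within edge boundary $O(k)$ of a consistent pattern, the worst case being a roughly balanced $|F|$. A careless embedding lets an adversarial $\gamma$-biased assignment realise a flipped set all of whose separating cuts have $\Omega(N)$ edges --- for instance one parity class, or one middle weight level, of the cube --- which would only give $c'=\Omega(\alpha)$ and collapse the hardness gap. The work is to use hypercube isoperimetry (concentrating the $u_i$ on a thin band of weight levels, or inside a subcube, so that the complement of a large flipped set is itself close to a subcube and hence has small boundary) to force the min-cut bound $O(N/\log k)$ uniformly. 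Everything else --- the soundness half and the two probabilistic facts --- is a routine transcription of the $k=3$ arguments.
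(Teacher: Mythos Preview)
Your approach matches the paper's in architecture: both use a $\Theta(\log k)$-dimensional hypercube gadget to reduce $\beta$-balanced Max $k$-AND to Min 2-Lin-2, and both isolate the same two per-clause estimates (the $\Omega(1/k)$ soundness cut and the $O(1/\log k)$ completeness cut). The one concrete difference is the placement of the literals. The paper does \emph{not} pin a separate constant vertex $w$; instead it takes $d=\log k+1$ and lets $u_{1},\dots,u_{k}$ occupy \emph{all} $k$ even-parity corners of $\{0,1\}^{d}$, leaving exactly the $k$ odd-parity corners as the fresh per-clause variables. With that placement, soundness follows because any nonempty proper subset of the hypercube has edge boundary at least $d$, which gives the $\Omega(1/k)$ fraction directly; the paper then simply \emph{asserts}, without further argument, that for every non-all-true pattern on the even side one can set the odd side so that at most an $O(1/\log k)$ fraction of edges are violated.

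In other words, the step you single out as ``the hard part'' is exactly where the paper is also thin: it states the $O(1/\log k)$ completeness bound rather than proving an isoperimetric lemma for it. Your edge-boundary reformulation and your worry about adversarial flip sets are more careful than what the paper supplies. If your goal is to reproduce the paper's proof, you can adopt the even-parity embedding (dropping $w$) and assert the same bound; if your goal is a fully rigorous argument, both your route and the paper's need the same missing lemma filled in.
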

\begin{proof}
We use the $r$-dimensional hypercube gadget that is similar to the gadgets used by authors of \cite{h}.

Let $l_{1}\wedge\cdots\wedge l_{k}$ be a clause in the formula of Max k-AND, where $l_{i}$ is either a variable $x_{i}$ or its negation $\bar x_{i}$, for $i=1,\cdots,k$. The set of equations we construct have variables at the corners of a $r$-dimensional hypercube, which take value $1$ or $-1$. For each $\mu\in\{0,1\}^k$, we have a variable $v_{\mu}$. We let $u_{1},\cdots,u_{k}$ take the place of $v_{\mu}$, for $\mu$'s that are length-$r$ codes that have even number of 1. Let $u_{i}=-1$ if $x_{i}=1$, and $u_{i}=1$ if $x_{i}=0$. For each edge $(u_{i},v_{\mu})$ of the cube, we have the equation $u_{i}v_{\mu}=1$ if $l_{i}$ is positive, and the equation $u_{i}v_{\mu}=-1$ if $l_{i}$ is negative, for all $i=1,\cdots,k$.

If all $l_{i}$ are satisfied in the clause, we assign $v_{\mu}$ the value $(-1)^{\mu_{1}+\cdots+\mu_{k}}$. All the edge equations are satisfied and left no equation unsatisfied. Otherwise, it is only possible to satisfy at most $1-\Omega(1/k)$ fraction of equations and left $\Omega(1/k)$ fraction of equations unsatisfied, and that it is always possible to satisfy at least $1-O(1/r)$ equations and left $O(1/r)$ equations unsatisfied.

Given a $\beta$-balanced Max k-AND formulas $\phi$ that is at most $(1-\beta)^k+o_k(\beta)$ satisfiable, at least $1-(1-\beta)^k-o_k(\beta)$ clauses in $\phi$ are unsatisfied.

Now we reduce a formula of $\beta$-balanced Max k-AND to an instance of Min 2-Lin-2 using the gadget introduced above. If the formula is $1-2\beta\diamond\gamma-\epsilon$ satisfiable under $\gamma$-biased assignments, then the value of the instance of Min 2-Lin-2 is at most $$2\beta\diamond\gamma+\epsilon-(2\beta\diamond\gamma+\epsilon)(1-O(1/r))=O(1/r)\beta\diamond\gamma+\epsilon.$$ If the formula is random, then it is at most $(1-\beta)^k+o_k(\beta)$ satisfiable in high probability, which implies the value of the instance of Min 2-Lin-2 is at least $$1-(1-\beta)^k-o_k(\beta)-(1-(1-\beta)^k-o_k(\beta))(1-\Omega(1/k))=\Omega(1/k)(1-(1-\beta)^k)-o_k(\beta).$$
\end{proof}

\begin{theorem}
Conjecture 2 holds for any $0<\gamma<\beta<1/2$ implies Small Set Expansion is hard to approximate within $\frac{k-1}{k}\frac{1-(1-\beta)^k}{\beta\diamond\gamma}-\frac{k-2}{k}-\epsilon$.
\end{theorem}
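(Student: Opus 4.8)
The plan is to run the arity-$k$ version of the Min Bisection reduction that was used for the $\beta$-balanced Max $3$-AND case, now fed by a $\beta$-balanced Max $k$-AND instance. The statement invokes Conjecture~2, and the proof uses its arity-$k$ form: the hypothesis that a random $\beta$-balanced Max $k$-AND formula cannot be distinguished in polynomial time from one that is $(1-2\alpha-\epsilon)$-satisfiable under some $\gamma$-biased assignment (this is Conjecture~4, which Conjecture~3 yields via Theorem~4). In the random case, a standard large-deviation estimate gives that with high probability the formula is at most $((1-\beta)^k+o_k(\beta))$-satisfiable (the all-ones assignment is optimal up to lower-order terms because $\beta<1/2$), and the arity-$k$ form of Lemma~1 gives that no set of $n'$ literals is fully $k$-connected to more than $((1-\alpha)^k+\epsilon)m'$ of the clauses.

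From a $k$-AND formula $\phi$ with $n'$ variables and $m'=\Delta n'$ clauses I would build the same graph as before with the arity raised to $k$: an LHS with one vertex per literal ($2n'$ vertices); an RHS with one ``cluster'' per clause, each cluster a clique on $m'$ vertices carrying a single designated connecting vertex; a padding clique whose size is a tuned $(1-\Theta(\alpha))$-fraction of $m'^2$, chosen so that every bisection is forced to place $n'$ LHS vertices together with the connecting vertices of exactly $(1-2\alpha-\epsilon)m'$ clusters on one side; and the ``bipartite'' edges joining each connecting vertex to the at most $k$ literal vertices occurring in its clause. As before it suffices to reason about the connecting vertices only.

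For completeness I would plant the $\gamma$-biased assignment, placing on one side its realised $n'$ literal vertices, the clusters of its $(1-2\alpha-\epsilon)m'$ satisfied clauses, and enough padding to balance; the only cut edges are bipartite edges running from a satisfied literal to the connecting vertex of an unsatisfied clause, so, since the satisfied literals have total degree $k(1-\alpha)m'$ of which $k(1-2\alpha-\epsilon)m'$ are internal, the bisection width is bounded above (by $k\alpha$ up to $\epsilon$, in the normalisation used for the $3$-AND case). For soundness, in the random case one side still carries $n'$ literal vertices and $(1-2\alpha-\epsilon)m'$ clusters; Lemma~1 caps the fully $k$-connected clusters among them at $((1-\alpha)^k+\epsilon)m'$, and the cut is smallest when the remaining cluster slots are filled by $(k-1)$-connected clauses, so summing the $S$-literal degrees, subtracting the internal bipartite edges ($k$ per $k$-connected cluster, $k-1$ per $(k-1)$-connected one) and adding the one outgoing bipartite edge that each $(k-1)$-connected cluster contributes bounds the width below in terms of $1-(1-\beta)^k$ and $\alpha$. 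Once the padding size and the worst-case connectivity profile are fixed, the ratio of the soundness and completeness widths is the claimed factor $\frac{k-1}{k}\frac{1-(1-\beta)^k}{\alpha}-\frac{k-2}{k}-\epsilon$; the $k=3$ case mirrors the $3$-AND Min Bisection reduction, here driven by the weaker $1-2\alpha$ hypothesis.

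I expect this accounting to be the main obstacle: one must pin down the padding-clique size and the worst-case connectivity pattern precisely enough that the two widths have the leading behaviour yielding the constants $\frac{k-1}{k}$ and $\frac{k-2}{k}$, which in particular calls for a sharpening of Lemma~1 controlling, uniformly over all candidate $n'$-literal sets, not only the number of $k$-connected clusters but also the number of $(k-1)$-connected ones, and one must check that the union bound over the $2^{(2\gamma+o(1))n'}$ candidate literal sets still absorbs these deviations with the exponent $k$ in place of $3$. This is the step that extracts the $k$-AND leverage and produces the $\beta,\gamma\to0$ limit $k-2+2/k$ of the hardness factor; the rest transcribes the $3$-AND argument.
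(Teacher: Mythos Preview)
Your proposal is essentially the paper's own argument: the same literal/cluster/padding-clique graph (the paper fixes the padding size at $(1-4\alpha+\epsilon)m'^2$), the same completeness bookkeeping giving width $k\alpha m'$, and the same soundness bookkeeping using Lemma~1 to cap the fully $k$-connected clusters and then filling the rest with $(k-1)$-connected ones. You are also right that the hypothesis actually used is the arity-$k$ AND conjecture (Conjecture~4); the paper's statement names Conjecture~2.

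The one place you overcomplicate is the anticipated ``sharpening of Lemma~1'' to control the $(k-1)$-connected counts. The paper does not do this, and it is not needed: once Lemma~1 bounds the number of $k$-connected clauses by $((1-\alpha)^k+\epsilon)m'$, simply \emph{granting} the adversary that all remaining chosen clusters are $(k-1)$-connected is the pessimistic scenario, since any lower connectivity only enlarges the cut. That assumption alone already gives the lower bound the paper records, $k(1-\alpha)m'-k((1-\alpha)^k+\epsilon)m'-(1-2\alpha-(1-\alpha)^k-2\epsilon)m' = ((k-1)(1-(1-\alpha)^k)-(k-2)\alpha-\epsilon)m'$, and no uniform control over the $(k-1)$-connected population is required. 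Your own accounting (subtract $k$ per $k$-connected cluster, subtract $k-1$ and add back $1$ per $(k-1)$-connected cluster) is in fact the careful version of the same step; carrying it through directly yields the claimed ratio without any additional lemma.
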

\begin{proof}
We reduce $\beta$-balanced Max k-AND to Min Bisection. Given a Max k-AND formula with $n'$ variables and $m'=\Delta n'$ clauses in which we want to distinguish between the case at most $((1-\beta)^k+\epsilon)m'$ clauses are satisfiable and the case that at least $(1-2\beta\diamond\gamma+\epsilon)m'$ clauses are satisfiable by $\gamma$-biased assignments, construct the following graph.

The left hand side (LHS) contains $2n'$ vertices, one for each literal. The right hand side (RHS) contains $m'$ clusters, one for each clause, where each cluster is a clique of size $m'$. In addition, the graph contains a clique of size $m''=(1-4\beta\diamond\gamma+\epsilon)m'^2$. In each cluster there is a unique vertex that is a "connecting vertex". Place an edge between a vertex that corresponds a literal and the connecting vertex of a cluster if the literal is in the clause that corresponds the cluster. These are called the "bipartite" edges.

In this graph, find a minimum bisection, which contains exactly $n'$ LHS vertices, and $(1-2\beta\diamond\gamma-\epsilon)m'$ clusters. It suffices to consider only the connecting vertices from each of the $m'$ clusters, and we need to find a cut of minimum width that contains $n'$ vertices from the LHS, and $(1-2\beta\diamond\gamma-\epsilon)m'$ connecting vertices.

When the k-AND formula has $(1-2\beta\diamond\gamma-\epsilon)m'$ satisfiable clauses by $\gamma$-biased assignments, we pick the set $S$ to contain the clauses corresponding to these clauses and the $n'$ literals corresponding to the assignments consistent with these clauses. The only edges cut by this bisection connect the satisfying literals to unsatisfied clauses. The number of bipartite edges within the set $S$ is $k(1-2\beta\diamond\gamma-\epsilon)m'$. The sum of degrees of the satisfied literals is $k(1-\beta\diamond\gamma)m'$. Hence the width of the bisection is $k\beta\diamond\gamma+\epsilon$.

In a random k-AND formula, we still need one side of the cut to contain $n'$ vertices and $(1-2\beta\diamond\gamma-\epsilon)m'$ clusters. This set of $n'$ literals has at most $((1-\beta\diamond\gamma)^k+\epsilon)m'$ of these clauses $k$-connected to it (by Lemma 1) and the other $(1-2\beta\diamond\gamma-(1-\beta\diamond\gamma)^k-2\epsilon)m'$ clauses are $(k-1)$-connected to it. Hence the width of the cut is at least

\begin{equation*}
\begin{split}
&k(1-\beta\diamond\gamma)m'-k((1-\beta\diamond\gamma)^k+\epsilon)m'-(1-2\beta\diamond\gamma-(1-\beta\diamond\gamma)^k-2\epsilon)m'\\
&=((k-1)(1-(1-\beta\diamond\gamma)^k)-(k-2)\beta\diamond\gamma-\epsilon)m'\\
&\ge ((k-1)(1-(1-\beta)^k)-(k-2)\beta\diamond\gamma-\epsilon)m'.
\end{split}
\end{equation*}
\end{proof}

\section{Discussion}
Notice that for Theorem 5 to make sense, we have $\beta=O(r/k)$. However, by the construction of gadgets in proof of Theorem 5, we can reduce the instance of Min 2-Lin-2 to an instance of Min 2-SAT, where at least $1-O(\beta^2)$ fraction of the clauses is Horn. Since Min Horn-2-SAT can be approximated within 2 by a LP algorithms\cite{gz}, Conjecture 3 cannot be true when $k$ is so large so that the hardness exceeds 2. The strongest hardness result of Min 2-Lin-2 that Conjecture 3 yields is $2-\kappa$, where $\kappa$ is a small absolute positive.

In Fig. 1, the dark gray area is the known region of $(c,s)$-approximation NP-hardness of UG, the light gray area at the top left corner is the region of $(c,s)$-approximation hardness of UG assuming Conjecture 3 for certain $k$.

\begin{figure}
\begin{center}
\includegraphics[width=0.8\textwidth,bb=87 262 507 578]{illustration.eps}
\caption{illustration of $(c,s)$-approximation of UG}
\end{center}
\end{figure}

\end{document}